\DeclarePairedDelimiter\ceil{\lceil}{\rceil}
\newcolumntype{C}{>{\centering\arraybackslash}X}
\newcolumntype{R}{>{\flushright\arraybackslash}X}
\newcolumntype{L}{>{\flushleft\arraybackslash}X}
\newcommand*{\unit}[2]{\mbox{\ensuremath{#1\,\mathrm{#2}}}}
\newcommand*{\figref}[1]{Figure~\ref{#1}}
\newcommand*{\secref}[1]{Section~\ref{#1}}
\newcommand*{\tabref}[1]{Table~\ref{#1}}
\newcommand*{\algref}[1]{Algorithm~\ref{#1}}
\theoremstyle{definition}
\newtheorem{thm}{Theorem}
\DeclareMathOperator*{\argmax}{arg\,max}
\newcommand{\comm}[1]{}
\begin{document}
\date{\today}

\bstctlcite{IEEEexample:BSTcontrol}


\title{Distributed Spectral Efficiency Maximization in Full-Duplex Cellular Networks}

\author{
Jos\'{e} Mairton B.~da Silva Jr$^{\star}$\thanks{Jos\'{e}
Mairton B. da Silva Jr. would like to acknowledge CNPq, a Brazilian research-support agency. The simulations were 
performed on resources 
provided by the Swedish National Infrastructure for Computing (SNIC) at PDC Centre for High 
Performance Computing (PDC-HPC).}, Yuzhe Xu$^{\star}$, G\'{a}bor Fodor$^{\star\dagger}$,
Carlo
Fischione${^\star}$\\
$^\star$KTH, Royal Institute of Technology, Stockholm, Sweden \\
$^\dagger$Ericsson Research, Kista, Sweden
\\[0pt]\vspace{-0.95\baselineskip}
}


\begin{acronym}[LTE-Advanced]
  \acro{2G}{Second Generation}
  \acro{3-DAP}{3-Dimensional Assignment Problem}
  \acro{3G}{3$^\text{rd}$~Generation}
  \acro{3GPP}{3$^\text{rd}$~Generation Partnership Project}
  \acro{4G}{4$^\text{th}$~Generation}
  \acro{5G}{5$^\text{th}$~Generation}
  \acro{AA}{Antenna Array}
  \acro{AC}{Admission Control}
  \acro{AD}{Attack-Decay}
  \acro{ADSL}{Asymmetric Digital Subscriber Line}
  \acro{AHW}{Alternate Hop-and-Wait}
  \acro{AMC}{Adaptive Modulation and Coding}
  \acro{AP}{Access Point}
  \acro{APA}{Adaptive Power Allocation}
  \acro{ARMA}{Autoregressive Moving Average}
  \acro{ATES}{Adaptive Throughput-based Efficiency-Satisfaction Trade-Off}
  \acro{AWGN}{Additive White Gaussian Noise}
  \acro{BB}{Branch and Bound}
  \acro{BD}{Block Diagonalization}
  \acro{BER}{Bit Error Rate}
  \acro{BF}{Best Fit}
  \acro{BFD}{bidirectional full duplex}
  \acro{BLER}{BLock Error Rate}
  \acro{BPC}{Binary Power Control}
  \acro{BPSK}{Binary Phase-Shift Keying}
  \acro{BRA}{Balanced Random Allocation}
  \acro{BS}{base station}
  \acro{CAP}{Combinatorial Allocation Problem}
  \acro{CAPEX}{Capital Expenditure}
  \acro{CBF}{Coordinated Beamforming}
  \acro{CBR}{Constant Bit Rate}
  \acro{CBS}{Class Based Scheduling}
  \acro{CC}{Congestion Control}
  \acro{CDF}{Cumulative Distribution Function}
  \acro{CDMA}{Code-Division Multiple Access}
  \acro{CL}{Closed Loop}
  \acro{CLPC}{Closed Loop Power Control}
  \acro{CNR}{Channel-to-Noise Ratio}
  \acro{CPA}{Cellular Protection Algorithm}
  \acro{CPICH}{Common Pilot Channel}
  \acro{CoMP}{Coordinated Multi-Point}
  \acro{CQI}{Channel Quality Indicator}
  \acro{CRM}{Constrained Rate Maximization}
	\acro{CRN}{Cognitive Radio Network}
  \acro{CS}{Coordinated Scheduling}
  \acro{CSI}{Channel State Information}
  \acro{CUE}{Cellular User Equipment}
  \acro{D2D}{Device-to-Device}
  \acro{DCA}{Dynamic Channel Allocation}
  \acro{DE}{Differential Evolution}
  \acro{DFT}{Discrete Fourier Transform}
  \acro{DIST}{Distance}
  \acro{DL}{downlink}
  \acro{DMA}{Double Moving Average}
	\acro{DMRS}{Demodulation Reference Signal}
  \acro{D2DM}{D2D Mode}
  \acro{DMS}{D2D Mode Selection}
  \acro{DPC}{Dirty Paper Coding}
  \acro{DRA}{Dynamic Resource Assignment}
  \acro{DSA}{Dynamic Spectrum Access}
  \acro{DSM}{Delay-based Satisfaction Maximization}
  \acro{ECC}{Electronic Communications Committee}
  \acro{EFLC}{Error Feedback Based Load Control}
  \acro{EI}{Efficiency Indicator}
  \acro{eNB}{Evolved Node B}
  \acro{EPA}{Equal Power Allocation}
  \acro{EPC}{Evolved Packet Core}
  \acro{EPS}{Evolved Packet System}
  \acro{E-UTRAN}{Evolved Universal Terrestrial Radio Access Network}
  \acro{ES}{Exhaustive Search}
  \acro{FD}{full duplex}
  \acro{FDD}{frequency division duplex}
  \acro{FDM}{Frequency Division Multiplexing}
  \acro{FER}{Frame Erasure Rate}
  \acro{FF}{Fast Fading}
  \acro{FSB}{Fixed Switched Beamforming}
  \acro{FST}{Fixed SNR Target}
  \acro{FTP}{File Transfer Protocol}
  \acro{GA}{Genetic Algorithm}
  \acro{GBR}{Guaranteed Bit Rate}
  \acro{GLR}{Gain to Leakage Ratio}
  \acro{GOS}{Generated Orthogonal Sequence}
  \acro{GPL}{GNU General Public License}
  \acro{GRP}{Grouping}
  \acro{HARQ}{Hybrid Automatic Repeat Request}
  \acro{HD}{half-duplex}
  \acro{HMS}{Harmonic Mode Selection}
  \acro{HOL}{Head Of Line}
  \acro{HSDPA}{High-Speed Downlink Packet Access}
  \acro{HSPA}{High Speed Packet Access}
  \acro{HTTP}{HyperText Transfer Protocol}
  \acro{ICMP}{Internet Control Message Protocol}
  \acro{ICI}{Intercell Interference}
  \acro{ID}{Identification}
  \acro{IETF}{Internet Engineering Task Force}
  \acro{ILP}{Integer Linear Program}
  \acro{JRAPAP}{Joint RB Assignment and Power Allocation Problem}
  \acro{UID}{Unique Identification}
  \acro{IID}{Independent and Identically Distributed}
  \acro{IIR}{Infinite Impulse Response}
  \acro{ILP}{Integer Linear Problem}
  \acro{IMT}{International Mobile Telecommunications}
  \acro{INV}{Inverted Norm-based Grouping}
	\acro{IoT}{Internet of Things}
  \acro{IP}{Integer Programming}
  \acro{IPv6}{Internet Protocol Version 6}
  \acro{ISD}{Inter-Site Distance}
  \acro{ISI}{Inter Symbol Interference}
  \acro{ITU}{International Telecommunication Union}
  \acro{JAFM}{joint assignment and fairness maximization}
  \acro{JAFMA}{joint assignment and fairness maximization algorithm}
  \acro{JOAS}{Joint Opportunistic Assignment and Scheduling}
  \acro{JOS}{Joint Opportunistic Scheduling}
  \acro{JP}{Joint Processing}
	\acro{JS}{Jump-Stay}
  \acro{KKT}{Karush-Kuhn-Tucker}
  \acro{L3}{Layer-3}
  \acro{LAC}{Link Admission Control}
  \acro{LA}{Link Adaptation}
  \acro{LC}{Load Control}
  \acro{LOS}{line of sight}
  \acro{LP}{Linear Programming}
  \acro{LTE}{Long Term Evolution}
	\acro{LTE-A}{\ac{LTE}-Advanced}
  \acro{LTE-Advanced}{Long Term Evolution Advanced}
  \acro{M2M}{Machine-to-Machine}
  \acro{MAC}{medium access control}
  \acro{MANET}{Mobile Ad hoc Network}
  \acro{MC}{Modular Clock}
  \acro{MCS}{Modulation and Coding Scheme}
  \acro{MDB}{Measured Delay Based}
  \acro{MDI}{Minimum D2D Interference}
  \acro{MF}{Matched Filter}
  \acro{MG}{Maximum Gain}
  \acro{MH}{Multi-Hop}
  \acro{MIMO}{Multiple Input Multiple Output}
  \acro{MINLP}{mixed integer nonlinear programming}
  \acro{MIP}{Mixed Integer Programming}
  \acro{MISO}{Multiple Input Single Output}
  \acro{MLWDF}{Modified Largest Weighted Delay First}
  \acro{MME}{Mobility Management Entity}
  \acro{MMSE}{Minimum Mean Square Error}
  \acro{MOS}{Mean Opinion Score}
  \acro{MPF}{Multicarrier Proportional Fair}
  \acro{MRA}{Maximum Rate Allocation}
  \acro{MR}{Maximum Rate}
  \acro{MRC}{Maximum Ratio Combining}
  \acro{MRT}{Maximum Ratio Transmission}
  \acro{MRUS}{Maximum Rate with User Satisfaction}
  \acro{MS}{Mode Selection}
  \acro{MSE}{Mean Squared Error}
  \acro{MSI}{Multi-Stream Interference}
  \acro{MTC}{Machine-Type Communication}
  \acro{MTSI}{Multimedia Telephony Services over IMS}
  \acro{MTSM}{Modified Throughput-based Satisfaction Maximization}
  \acro{MU-MIMO}{Multi-User Multiple Input Multiple Output}
  \acro{MU}{Multi-User}
  \acro{NAS}{Non-Access Stratum}
  \acro{NB}{Node B}
	\acro{NCL}{Neighbor Cell List}
  \acro{NLP}{Nonlinear Programming}
  \acro{NLOS}{non-line of sight}
  \acro{NMSE}{Normalized Mean Square Error}
  \acro{NORM}{Normalized Projection-based Grouping}
  \acro{NP}{non-polynomial time}
  \acro{NRT}{Non-Real Time}
  \acro{NSPS}{National Security and Public Safety Services}
  \acro{O2I}{Outdoor to Indoor}
  \acro{OFDMA}{Orthogonal Frequency Division Multiple Access}
  \acro{OFDM}{Orthogonal Frequency Division Multiplexing}
  \acro{OFPC}{Open Loop with Fractional Path Loss Compensation}
	\acro{O2I}{Outdoor-to-Indoor}
  \acro{OL}{Open Loop}
  \acro{OLPC}{Open-Loop Power Control}
  \acro{OL-PC}{Open-Loop Power Control}
  \acro{OPEX}{Operational Expenditure}
  \acro{ORB}{Orthogonal Random Beamforming}
  \acro{JO-PF}{Joint Opportunistic Proportional Fair}
  \acro{OSI}{Open Systems Interconnection}
  \acro{PAIR}{D2D Pair Gain-based Grouping}
  \acro{PAPR}{Peak-to-Average Power Ratio}
  \acro{P2P}{Peer-to-Peer}
  \acro{PC}{Power Control}
  \acro{PCI}{Physical Cell ID}
  \acro{PDCCH}{physical downlink control channel}
  \acro{PDF}{Probability Density Function}
  \acro{PER}{Packet Error Rate}
  \acro{PF}{Proportional Fair}
  \acro{P-GW}{Packet Data Network Gateway}
  \acro{PL}{Pathloss}
  \acro{PRB}{Physical Resource Block}
  \acro{PROJ}{Projection-based Grouping}
  \acro{ProSe}{Proximity Services}
  \acro{PS}{Packet Scheduling}
  \acro{PSO}{Particle Swarm Optimization}
  \acro{PUCCH}{physical uplink control channel}
  \acro{PZF}{Projected Zero-Forcing}
  \acro{QAM}{Quadrature Amplitude Modulation}
  \acro{QoS}{quality of service}
  \acro{QPSK}{Quadri-Phase Shift Keying}
  \acro{RAISES}{Reallocation-based Assignment for Improved Spectral Efficiency and Satisfaction}
  \acro{RAN}{Radio Access Network}
  \acro{RA}{Resource Allocation}
  \acro{RAT}{Radio Access Technology}
  \acro{RATE}{Rate-based}
  \acro{RB}{resource block}
  \acro{RBG}{Resource Block Group}
  \acro{REF}{Reference Grouping}
  \acro{RF}{Radio-Frequency}
  \acro{RLC}{Radio Link Control}
  \acro{RM}{Rate Maximization}
  \acro{RNC}{Radio Network Controller}
  \acro{RND}{Random Grouping}
  \acro{RRA}{Radio Resource Allocation}
  \acro{RRM}{Radio Resource Management}
  \acro{RSCP}{Received Signal Code Power}
  \acro{RSRP}{Reference Signal Receive Power}
  \acro{RSRQ}{Reference Signal Receive Quality}
  \acro{RR}{Round Robin}
  \acro{RRC}{Radio Resource Control}
  \acro{RSSI}{Received Signal Strength Indicator}
  \acro{RT}{Real Time}
  \acro{RU}{Resource Unit}
  \acro{RUNE}{RUdimentary Network Emulator}
  \acro{RV}{Random Variable}
  \acro{SAC}{Session Admission Control}
  \acro{SCM}{Spatial Channel Model}
  \acro{SC-FDMA}{Single Carrier - Frequency Division Multiple Access}
  \acro{SD}{Soft Dropping}
  \acro{S-D}{Source-Destination}
  \acro{SDPC}{Soft Dropping Power Control}
  \acro{SDMA}{Space-Division Multiple Access}
  \acro{SER}{Symbol Error Rate}
  \acro{SES}{Simple Exponential Smoothing}
  \acro{S-GW}{Serving Gateway}
  \acro{SINR}{signal-to-interference-plus-noise ratio}
  \acro{SI}{self-interference}
  \acro{SIP}{Session Initiation Protocol}
  \acro{SISO}{Single Input Single Output}
  \acro{SIMO}{Single Input Multiple Output}
  \acro{SIR}{Signal to Interference Ratio}
  \acro{SLNR}{Signal-to-Leakage-plus-Noise Ratio}
  \acro{SMA}{Simple Moving Average}
  \acro{SNR}{Signal to Noise Ratio}
  \acro{SORA}{Satisfaction Oriented Resource Allocation}
  \acro{SORA-NRT}{Satisfaction-Oriented Resource Allocation for Non-Real Time Services}
  \acro{SORA-RT}{Satisfaction-Oriented Resource Allocation for Real Time Services}
  \acro{SPF}{Single-Carrier Proportional Fair}
  \acro{SRA}{Sequential Removal Algorithm}
  \acro{SRS}{Sounding Reference Signal}
  \acro{SU-MIMO}{Single-User Multiple Input Multiple Output}
  \acro{SU}{Single-User}
  \acro{SVD}{Singular Value Decomposition}
  \acro{TCP}{Transmission Control Protocol}
  \acro{TDD}{time division duplex}
  \acro{TDMA}{Time Division Multiple Access}
  \acro{TNFD}{three node full duplex}
  \acro{TETRA}{Terrestrial Trunked Radio}
  \acro{TP}{Transmit Power}
  \acro{TPC}{Transmit Power Control}
  \acro{TTI}{Transmission Time Interval}
  \acro{TTR}{Time-To-Rendezvous}
  \acro{TSM}{Throughput-based Satisfaction Maximization}
  \acro{TU}{Typical Urban}
  \acro{UE}{user equipment}
  \acro{UEPS}{Urgency and Efficiency-based Packet Scheduling}
  \acro{UL}{uplink}
  \acro{UMTS}{Universal Mobile Telecommunications System}
  \acro{URI}{Uniform Resource Identifier}
  \acro{URM}{Unconstrained Rate Maximization}
  \acro{VR}{Virtual Resource}
  \acro{VoIP}{Voice over IP}
  \acro{WAN}{Wireless Access Network}
  \acro{WCDMA}{Wideband Code Division Multiple Access}
  \acro{WF}{Water-filling}
  \acro{WiMAX}{Worldwide Interoperability for Microwave Access}
  \acro{WINNER}{Wireless World Initiative New Radio}
  \acro{WLAN}{Wireless Local Area Network}
  \acro{WMPF}{Weighted Multicarrier Proportional Fair}
  \acro{WPF}{Weighted Proportional Fair}
  \acro{WSN}{Wireless Sensor Network}
  \acro{WWW}{World Wide Web}
  \acro{XIXO}{(Single or Multiple) Input (Single or Multiple) Output}
  \acro{ZF}{Zero-Forcing}
  \acro{ZMCSCG}{Zero Mean Circularly Symmetric Complex Gaussian}
\end{acronym}

\maketitle
\IEEEpeerreviewmaketitle

\newcommand*{\BS}[1]{\ensuremath{\text{BS}_{#1}}}
\newcommand*{\UE}[1]{\ensuremath{\text{UE}_{#1}}}

\begin{abstract}
Three-node full-duplex is a promising new transmission mode between a full-duplex
capable wireless node and two other wireless nodes that use half-duplex transmission
and reception respectively.
Although three-node full-duplex transmissions can increase the
spectral efficiency without requiring full-duplex capability of user devices,
inter-node interference -- in addition to the inherent self-interference --
can severely degrade the performance.
Therefore, as methods that provide effective self-interference mitigation evolve, the
management of inter-node interference is becoming increasingly important.
This paper considers a cellular system in which a full-duplex capable base station
serves a set of half-duplex capable users.
As the spectral efficiencies achieved
by the uplink and downlink transmissions are inherently intertwined, the objective
is to device channel assignment and power control algorithms that maximize the
weighted sum of the uplink-downlink transmissions. 
To this end a distributed auction based channel assignment algorithm is proposed, in which the
scheduled uplink users and the base station jointly determine the set of downlink users for
full-duplex transmission.
Realistic system simulations indicate that the spectral
efficiency can be up to \unit{89}{\%} better than using
the traditional half-duplex mode.
Furthermore, when the
self-interference cancelling level is high, the impact of the user-to-user interference is severe
unless properly managed.
\end{abstract}


\section{Introduction}\label{sec:intro}

Traditional cellular networks operate in \ac{HD} transmission mode, in which
a \ac{UE} or the \ac{BS} either transmits or receives on any given frequency channel.
However, the increasing demand
to support the transmission of unprecedented data quantities has led the research
community to investigate new wireless transmission technologies.
Recently, in-band \ac{FD} has been proposed as a key enabling technology to
drastically increase the spectral efficiency of conventional wireless transmission modes.
Due to recent advances in antenna design, interference cancellation algorithms,
\ac{SI} suppression techniques and prototyping of \ac{FD} transceivers,
\ac{FD} transmission is becoming a realistic technology component of
advanced wireless -- including cellular -- systems, especially in the low transmit power regime
\cite{Heino2015,Laughlin2015}.

In particular, in-band \ac{FD} and \ac{TNFD} transmission modes can
drastically increase the spectral efficiency of conventional wireless transmission modes since
both transmission techniques have the potential to double
the spectral efficiency of traditional wireless systems operating in \ac{HD}~\cite{Bharadia2013,Thilina2015}.
\ac{TNFD} involves three nodes, but only one of them needs to have \ac{FD} capability.
The FD-capable node transmits to its receiver node while receiving from another transmitter node
on the same frequency channel.

\begin{figure}
\centering
\includegraphics[width=0.6\linewidth,trim=2mm 0mm 2mm 4mm,clip]{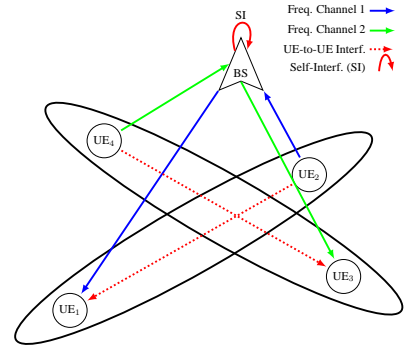}
\caption{A cellular network employing \ac{FD} with two \acp{UE} pairs. The base station selects
pairs of \acp{UE}, represented by the ellipses, and jointly schedules them for \ac{FD}
transmission by allocating frequency channels in the \ac{UL} and \ac{DL}.
To mitigate UE-to-UE interference, it is advantageous to co-schedule DL/UL users for \ac{FD} transmission
that are far apart, such as \UE{1}-\UE{2} and \UE{3}-\UE{4}.}
\label{fig:scenario_pair_fd}
\end{figure}
As illustrated in Figure~\ref{fig:scenario_pair_fd},
\ac{FD} operation in a cellular environment experiences new types of interference, aside from the
inherently present \ac{SI}. 
Because the level of \ac{UE}-to-\ac{UE} interference depends on the \ac{UE} locations
and their transmission powers, coordination mechanisms are needed to mitigate the
negative
effect of the interference on the spectral efficiency of the system~\cite{Goyal2015}.
A key element of such mechanisms is \ac{UE} {\it pairing} and frequency channel selection
that together determine 
which \acp{UE} should be scheduled for simultaneous \ac{UL} and \ac{DL} transmissions on
specific frequency channels.
Hence, it is crucial to design efficient and fair \acl{MAC} protocols and
physical layer procedures capable of supporting adequate pairing mechanisms.
Furthermore, in future cellular networks the idea is to
move from a fully centralized to a more distributed network~\cite{Osseiran2014}, where the
infrastructure of the \ac{BS} can be used to help the \acp{UE} to communicate in a 
distributed manner and reduce the processing burden at the BS, which is further increased by 
\ac{SI} cancellation.

To the best of our knowledge, the only work to consider a distributed approach for \ac{FD}
cellular networks is reported in~\cite{Bai2013}. 
However, the authors tackle the problem of the
\ac{UE}-to-\ac{UE} interference from an information theoretic perspective, without relating to
resource allocation and power control. 
Conversely, some works consider the joint subcarrier and
power allocation problem~\cite{Nam2015} and the joint duplex mode selection, channel allocation,
and power control problem~\cite{Feng2015} in \ac{FD} networks. 
The cellular network model
in~\cite{Nam2015} is applicable to \ac{FD} mobile nodes rather than to networks operating in
\ac{TNFD} mode. 
The work reported in~\cite{Feng2015} considers the case of \ac{TNFD} transmission
mode in a cognitive femto-cell context with bidirectional transmissions from \acp{UE} and develops
sum-rate optimal resource allocation and power control algorithms. 
However, none of these two
works consider a distributed approach for \ac{FD} cellular networks.

In this paper we formulate the joint problem of user pairing (i.e. co-scheduling of
UL and DL simultaneous transmissions on a frequency channel), and UL/DL power control as
a \ac{MINLP} problem, whose objective is to maximize
the overall spectral efficiency of the system. 
Due to the complexity of the \ac{MINLP} problem
proposed, our solution approach relies
on Lagrangian duality and a distributed auction algorithm in which UL users offer bids on
desirable DL users. 
In this iterative auction process, the BS -- as the entity that owns the radio resources -- 
accepts or rejects bids and performs resource assignment. 
This algorithm is tested
in a realistic system simulator that indicates that the bidding process converges
to a near optimal pairing and power allocation.

\section{System Model and Problem Formulation}
\label{sec:sys_mod}

\subsection{System Model}

We consider a single-cell cellular system in which only the \ac{BS} is \ac{FD} capable, while the
\acp{UE} served by the \ac{BS} are only \ac{HD} capable, as illustrated by \figref{fig:scenario_pair_fd}.
In \figref{fig:scenario_pair_fd}, the \ac{BS} is subject to \ac{SI} and the \acp{UE} in the \ac{UL}
(\UE{2} and \UE{4}) cause \ac{UE}-to-\ac{UE} interference to co-scheduled \acp{UE} in the \ac{DL},
that is to \UE{1} and \UE{3} respectively.
The number of \acp{UE} in the \ac{UL} and \ac{DL} is denoted by $I$ and $J$, respectively, which
are constrained by the total number of frequency channels in the system $F$, i.e., $I \leq F$
and $J \leq F$.
The sets of \ac{UL} and \ac{DL} users are denoted by by $\mathcal{I} = \{1,\ldots,I\}$ and
$\mathcal{J} = \{1,\ldots,J\}$ respectively.

We consider frequency flat and slow fading, such that the channels are constant during the
time slot of a scheduling instance and over the frequency channels assigned to co-scheduled users.
Let $G_{ib}$ denote the effective path gain
between transmitter \ac{UE}  $i$ and the \ac{BS}, $G_{bj}$ denote the effective path gain between
the \ac{BS} and the receiving \ac{UE} $j$, and $G_{ij}$ denote the interfering path gain between
the \ac{UL} transmitter UE $i$ and the \ac{DL} receiver UE $j$.
To take into account the
residual \ac{SI} power that leaks to the receiver, we define $\beta$ as the \ac{SI}
cancellation coefficient, such that the \ac{SI} power at the receiver of the \ac{BS} is
$\beta P_j^d$ when the transmit power is $P_j^d$.

The vector of transmit power levels in the \ac{UL} by \ac{UE} $i$ is denoted by $\mathbf{p^u} = [P_1^u
\ldots P_I^u] $, whereas the \ac{DL} transmit powers by the \ac{BS} is denoted by $\mathbf{p^d} =
[P_1^d\ldots P_J^d]$.
As illustrated in \figref{fig:scenario_pair_fd}, the UE-to-UE interference is
much dependent on the geometry of the co-scheduled UL and DL users, i.e., the {\it pairing} of \ac{UL}
and \ac{DL} users.
Therefore, \ac{UE} pairing is a key functions of the system.
Accordingly,
we define the assignment matrix, $\mathbf{X} \in \{0,1\}^{I\times J}$, such that
\begin{equation*}
 x_{ij} =
  \begin{cases}
   1, & \text{if the UL \UE{i} is paired with the DL \UE{j}}, \\
   0, & \text{otherwise.}
  \end{cases}
\end{equation*}

The \ac{SINR} at the \ac{BS} of transmitting user $i$
and the \ac{SINR} at the receiving user $j$ of the \ac{BS} are given by
\begin{equation}\label{eq:sinr_ul_dl}
 \gamma_{i}^u = \frac{ P_i^u G_{ib} }{ \sigma^2 + \sum_{j=1}^J x_{ij} P_j^d\beta}, \;
 \gamma_{j}^d = \frac{ P_j^d G_{bj} }{ \sigma^2 + \sum_{i=1}^I x_{ij} P_i^u G_{ij}},
\end{equation}
respectively,
where $x_{ij}$ in the denominator of $\gamma_{i}^u$
accounts for the \ac{SI} at the BS, whereas $x_{ij}$ in the denominator of $\gamma_{j}^d$
accounts for the UE-to-UE interference caused by \UE{i} to \UE{j}.

Thus, the achievable spectral efficiency for each user is given by the Shannon equation (in bits/s/Hz) for
the \ac{UL} and \ac{DL} as $C_{i}^u = \log_2(1 + \gamma_{i}^u)$ and $C_{j}^d = \log_2(1 +
\gamma_{j}^d)$, respectively.
In addition to the spectral efficiency, we consider weights for the \ac{UL} and \ac{DL} users,
which are denoted by $\alpha_i^u$ and $\alpha_j^d$, respectively. 
The idea behind weighing is that
it allows the system designer to choose between the commonly used sum rate maximization 
and important fairness related criteria such as the well known {\it
path loss compensation} typically employed in the power control of cellular
networks~\cite{Simonsson2008}. 
For the
weights $\alpha_i^u$ and $\alpha_j^d$, we can account for sum rate maximization with $\alpha_i^u =
\alpha_j^d=1$ and for path loss compensation with $\alpha_i^u = G_{ib}^{-1}$
and $\alpha_j^d= G_{bj}^{-1}$.

\subsection{Problem Formulation}\label{sub:prob_form}

Our goal is to jointly consider the assignment of \acp{UE} in the \ac{UL} and \ac{DL}
(\textit{pairing}), while maximizing the weighted sum spectral efficiency of all users.
Specifically, the problem is formulated as
\begin{subequations}\label{eq:weight_prob}
\begin{align}
\underset{\mathbf{X},\mathbf{p}^u,\mathbf{p}^d}{\text{maximize}}\quad
& \sum_{i=1}^I \alpha_i^u C_{i}^u + \sum_{j=1}^J \alpha_j^d C_{j}^d\label{eq:obj_wmax}\\
\text{subject to}\quad & \gamma_{i}^u \geq \gamma_{\text{th}}^u, \; \forall
i,\label{eq:gamma_ul}\\
\quad& \gamma_{j}^d \geq \gamma_{\text{th}}^d, \; \forall j,\label{eq:gamma_dl}\\
\quad& P_{i}^u \leq P_{\text{max}}^u, \; \forall i,\label{eq:power_ul}\\
\quad& P_{j}^d \leq P_{\text{max}}^d, \; \forall j,\label{eq:power_dl}\\
\quad& \sum_{i=1}^I x_{ij} \leq 1, \; \forall j,\label{eq:one_dl_to_ul}\\
\quad& \sum_{j=1}^J x_{ij} \leq 1, \; \forall i,\label{eq:one_ul_to_dl}\\
\quad& x_{ij} \in \{0,1\}, \; \forall i,j.\label{eq:binary_x}
\end{align}
\end{subequations}
The main
optimization variables
are $\mathbf{p}^u$, $\mathbf{p}^d$ and $\mathbf{X}$.
Constraints \eqref{eq:gamma_ul} and \eqref{eq:gamma_dl} ensure a minimum \ac{SINR} to be achieved
in the \ac{DL} and \ac{UL}, respectively.
Constraints \eqref{eq:power_ul} and \eqref{eq:power_dl}
limit the transmit powers whereas constraints \eqref{eq:one_dl_to_ul}-\eqref{eq:one_ul_to_dl}
assure that only one \ac{UE} in the \ac{DL} can share the frequency resource with a \ac{UE} in
the \ac{UL} and vice-versa. Note that constraints~\eqref{eq:gamma_ul}-\eqref{eq:gamma_dl} require 
that the \ac{SINR} targets for both UL and DL be defined a priori.
\comm{Note that if $I\neq J$, some users will not be paired and will
transmit in \ac{HD} mode if there are frequency channels available.}

Problem \eqref{eq:weight_prob} belongs to the category of \ac{MINLP}, which is known for its high
complexity and computational intractability. Thus, to solve problem \eqref{eq:weight_prob} we will
rely on Lagrangian duality, which is described on \secref{sec:lagr_dual}. 
We develop the optimal
power allocation for a 
\ac{UL}-\ac{DL} pair and the optimal closed-form solution for the
assignment,
which can be solved in a centralized manner. However, in future cellular networks the idea is to
move from a fully centralized to a more distributed network~\cite{Osseiran2014}, offloading the
burden on the \ac{BS}. With this objective, in
\secref{sec:dist_auction_sol} we use the optimal power allocation to create a distributed solution
for the assignment between \ac{UL} and \ac{DL} users.

\section{A Solution Approach Based on Lagrangian Duality}
\label{sec:lagr_dual}

From problem \eqref{eq:weight_prob}, we form the {\it partial} Lagrangian function by considering
constraints \eqref{eq:gamma_ul}-\eqref{eq:gamma_dl} and ignoring the integer
\eqref{eq:one_dl_to_ul}-\eqref{eq:binary_x} and power allocation constraints
\eqref{eq:power_ul}-\eqref{eq:power_dl}.
To this end, we introduce Lagrange multipliers
$\boldsymbol{\lambda}^u,\;\boldsymbol{\lambda}^d$, where
the superscript $u$ and $d$ denote the dimensions
of $I$ and $J$, respectively.
The partial Lagrangian is a function of the Lagrange
multipliers and the optimization variables $\mathbf{X},\mathbf{p}^u,\mathbf{p}^d$ as follows:
\begin{align}\label{eq:partial_Lagr}
L(\boldsymbol{\lambda}^{u},\boldsymbol{\lambda}^{d},\mathbf{X},\mathbf{p}^{u},\mathbf{p}^{d})
&\triangleq -\sum_{i=1}^I \alpha_i^u C_{i}^u - \sum_{j=1}^J \alpha_j^d C_{j}^d + \nonumber \\
&\hspace{-2cm}+\sum_{i=1}^I \lambda_i^u \Big(\gamma_{\text{th}}^u - \gamma_{i}^u
\Big) + \sum_{j=1}^J \lambda_j^d \Big(\gamma_{\text{th}}^d - \gamma_{j}^d \Big).
\end{align}

Let $g(\boldsymbol{\lambda}^{u},\boldsymbol{\lambda}^{d})$ denote the dual function obtained by
minimizing the partial Lagrangian \eqref{eq:partial_Lagr} with respect to the variables
$\mathbf{X},\mathbf{p}^u,\mathbf{p}^d$.
That is, the dual function is
\begin{align}
\label{eq:g_dual}
g(\boldsymbol{\lambda}^{u},\boldsymbol{\lambda}^{d}) &= \underset{\scriptscriptstyle
\mathbf{X}\in\mathcal{X}\;\mathbf{p}^u,\mathbf{p}^d\in\mathcal{P}}{\inf} \hspace{-0.5cm}
L(\boldsymbol{\lambda}^{u},\boldsymbol{\lambda}^{d},\mathbf{X},\mathbf{p}^{u},\mathbf{p}^{d}), 
\end{align}
where $\mathcal{X}$ and $\mathcal{P}$ are the set where the assignment and power allocation
constraints are fulfilled, respectively. Notice that we can rewrite the dual as
\begin{align}\label{eq:mod_dual}
\textstyle
g(\boldsymbol{\lambda}^{u},\boldsymbol{\lambda}^{d}) \!= \hspace{-0.5cm}
\underset{\scriptscriptstyle
\mathbf{X}\in\mathcal{X}\;\mathbf{p}^u,\mathbf{p}^d\in\mathcal{P}}{\inf}
\sum\limits_{n=1}^{N}\! \Bigg( q_{i_n}^u (\mathbf{X},\mathbf{p}^u,\mathbf{p}^d) + q_{j_n}^d
(\mathbf{X},\mathbf{p}^u,\mathbf{p}^d)\Bigg),
\end{align}
where we assume $N=I=J$\comm{$N=\max\{I,J\}$} is the maximum number of \ac{UL}-\ac{DL} pairs, 
$i_n$ and $j_n$ are the \ac{UL} and \ac{DL} users of pair $n$, respectively.
\comm{If $I\neq J$, then some users will not be co-scheduled and will transmit alone if there are 
resources available, which means that they will form a virtual pair.}
Moreover,
\begin{subequations}
\begin{align}
q_{i_n}^u (\mathbf{X},\mathbf{p}^u,\mathbf{p}^d) &\triangleq \lambda_{i_n}^u
\Big(\gamma_{\text{th}}^u -\gamma_{i_n}^u \Big) - \alpha_i^u C_{i_n}^u,\\
q_{j_n}^d (\mathbf{X},\mathbf{p}^u,\mathbf{p}^d) &\triangleq \lambda_{j_n}^d
\Big(\gamma_{\text{th}}^d -\gamma_{j_n}^d \Big) - \alpha_{j_n}^d C_{j_n}^d.
\end{align}
\end{subequations}

We can find the infimum of \eqref{eq:mod_dual} if we maximize the \ac{SINR} of the $N$
\ac{UL}-\ac{DL} pairs. Thus, we can write a closed-form expression for the assignment $x_{ij}$ as
follows:
\begin{align}\label{eq:x_sol}
x_{ij}^{\star} =
  \begin{cases}
   1, & \text{if } (i,j)= \underset{i,j}{\argmax}\;
   \Big(q_{i_n}^{u,\text{max}}+q_{j_n}^{d,\text{max}}\Big)
   \\
   0, & \text{otherwise,}
  \end{cases}
\end{align}
where for simplicity we denoted an ordinary pair as $(i,j)$.
Notice that $x_{ij}^{\star}$ and
equation \eqref{eq:x_sol} uniquely associate an \ac{UL} user with a \ac{DL} user.
However,
the solutions are still tied through the \acp{SINR} $\gamma_{i}^u$ and $\gamma_{j}^d$, i.e., the
solution to the assignment problem is still complex and -- through \eqref{eq:x_sol} --
is intertwined with the optimal power allocation.

Since the \acp{SINR} on the \ac{UL} are not separable from those on the \ac{DL},
we cannot analyse them independently.
Consequently, we need to find the powers that jointly minimize \eqref{eq:mod_dual}.
To this end, we first
analyse the dual problem, given by
\begin{subequations}\label{eq:dual_prob}
\begin{align}
& \underset{\boldsymbol{\lambda}^{u},\;\boldsymbol{\lambda}^{u}}{\text{maximize}} & &
g(\boldsymbol{\lambda}^{u},\boldsymbol{\lambda}^{d})\\
& \text{subject to} & & \lambda_i^u,\; \lambda_j^d, \geq 0, \forall i,j,
\end{align}
\end{subequations}
where recall that $g(\boldsymbol{\lambda}^{u},\boldsymbol{\lambda}^{d})$ is the solution of
problem \eqref{eq:g_dual}.
Notice that if constraints \eqref{eq:gamma_ul}-\eqref{eq:gamma_dl} are
fulfilled in the inequality or equality, $\lambda_i^u \Big(\gamma_{\text{th}}^u - \gamma_{i}^u
\Big)$ and $\lambda_j^d \Big(\gamma_{\text{th}}^d - \gamma_{j}^d \Big)$
will be either negative or zero.
If the terms are negative, then $\lambda_i^u$ or $\lambda_j^d$ will be zero.
Thus, the terms with $\lambda_i^u$ and $\lambda_j^d$ will not impact
$g(\boldsymbol{\lambda}^{u},\boldsymbol{\lambda}^{d})$.
Therefore, the dual is easily solved by
assigning zero to $\lambda_i^u$ or $\lambda_j^d$ whose corresponding UL and DL user fulfils the inequalities
\eqref{eq:gamma_ul}-\eqref{eq:gamma_dl}.
If there are users that do not fulfil the inequalities,
the problem is unbounded.

Therefore, we now turn our attention to the power allocation problem, and
-- based on the above considerations on $\lambda_i^u$ or $\lambda_j^d$--,
we formulate the power allocation problem as:
\begin{subequations}\label{eq:min_qi_qj}
\begin{align}
\underset{\mathbf{p}^u,\mathbf{p}^d}{\text{minimize}} \quad & -\sum_{i=1}^I \alpha_i^u C_i^u -
\sum_{j=1}^J \alpha_j^d C_j^d\\
\text{subject to} \quad& \mathbf{p}^u,\mathbf{p}^d \in \mathcal{P}.
\end{align}
\end{subequations}
From Gesbert et al.~\cite{Gesbert2008}, the optimal transmit power allocation will have either
$P_i^u$ or $P_j^d$ equal to $P_{\text{max}}^u$ or $P_{\text{max}}^d$, given that $i$ and $j$ share
a frequency channel and form a pair.
Moreover, from Feng et al.~\cite[Section III.B]{Feng2013}, the
optimal power allocation lies within the admissible area for pair $(i,j)$, where we do not show 
the explicit expressions for the optimal power allocation here due to space limit.

Therefore, with the optimal transmit powers for any given pair $(i,j)$, and with the closed-form
solution for the assignment in Eq.~\eqref{eq:x_sol}, we can solve the dual
problem~\eqref{eq:dual_prob}. To compute the optimal assignment as given by
Eq.~\eqref{eq:x_sol} requires checking $N!$ assignments~\cite[Section 1]{Burkard1999}, or we could
apply the Hungarian algorithm in a fully centralized manner~\cite[Section 3.2]{Burkard1999}, that
has worst-case complexity of $O(N^3)$.

However, we are not interested in such centralized and
demanding solutions that would increase the burden on the \ac{BS}. Since we are in a
network-controlled environment with the \ac{BS}, we use its resources to provide a distributed
solution for the assignment, whereas the power allocation would remain centralized, because
distributed power allocation schemes require too many iterations to converge. 
Therefore, in the next section we reformulate the closed-form
solution in Eq.~\eqref{eq:x_sol} and propose a fully distributed assignment based on Auction
Theory~\cite{Bertsekas1998}.

\section{Distributed Auction Solution}\label{sec:dist_auction_sol}
With the optimal power allocation for a pair $(i,j)$ at hand, and as mentioned in
\secref{sec:lagr_dual}, we are interested in a distributed solution for the closed-form solution
for the assignment in Eq.~\eqref{eq:x_sol} in order to reduce the burden on the \ac{BS} by 
supporting a more distributed system. 
In \secref{sub:prob_reform} we reformulate the
closed-form expression as
an asymmetric assignment problem, whereas \secref{sub:fund_auct} introduces the fundamental
definitions necessary to propose the distributed auction algorithm in
\secref{sub:auct_alg}. 
Furthermore, we give one of the core results in this paper in
\secref{sub:math_prop}, where we show that the number of
iterations of the algorithms is bounded and that the feasible assignment provided at the end is
within a bound of desired accuracy around the optimal assignment. 

\subsection{Problem Reformulation}\label{sub:prob_reform}

We can rewrite the closed form expression~\eqref{eq:x_sol} as an asymmetric assignment problem,
given by
\begin{subequations}
\label{eq:assym_assig_prob}
\begin{align}
\underset{\mathbf{X}}{\text{maximize}} \quad & \sum_{i=1}^I\sum_{j=1}^J c_{ij} x_{ij}\\
\text{subject to} \quad& \sum_{i=1}^I x_{ij} = 1, \; \forall j,\label{eq:dl_to_ul}\\
\quad& \sum_{j=1}^J x_{ij} = 1, \; \forall i,\label{eq:ul_to_dl}\\
\quad& x_{ij} \in \{0,1\}, \; \forall i,j,\label{eq:binary_x_newProb}
\end{align}
\end{subequations}
where $c_{ij} \!= \alpha_i^u C_i^u + \alpha_j^d C_j^d$ for a pair $(i,j)$ assigned to the same
frequency and it can be understood as the benefit of assigning \ac{UL} user $i$ to \ac{DL}
user $j$.
Constraint~\eqref{eq:dl_to_ul} ensures that the \ac{DL} users are \comm{can be}associated 
with one \ac{UL} user.\comm{at most one \ac{UL} user, and since $J>I$, some \ac{DL} users may not 
associate with a \ac{UL} user.}
Similarly, constraint~\eqref{eq:ul_to_dl} ensures that all the \ac{UL} users need to be associated
with a \ac{DL} user.\comm{Notice that if $J\!=I$, inequality on constraint~\eqref{eq:dl_to_ul} 
becomes an equality and there is no other difference in the manner the solution proposed here 
works.} To solve this problem in a distributed manner, we use Auction Theory.

\subsection{Fundamentals of the Auction}\label{sub:fund_auct}
We consider the assignment problem~\eqref{eq:assym_assig_prob}, where we want to
pair $I$ \ac{UL} and $J$ \ac{DL} users on a one-to-one basis, where the benefit for pairing
\ac{UL} user $i$ to \ac{DL} user $j$ is given by $c_{ij}$.
Initially, we assume that $J\!=I$ \comm{$J\!<I$}and $J\!\leq F$\comm{, where the unassigned 
\ac{DL} users will transmit in \ac{HD} mode. Later,}, whereas the set of \ac{DL} users to which 
\ac{UL} user $i$ can be paired is non-empty and is denoted by $\mathcal{A}(i)$.
We define an assignment $\mathcal{S}$ as a set of
\ac{UL}-\ac{DL} pairs $(i,j)$ such that $j\in \mathcal{A}(i)$ for all
$(i,j)\in \mathcal{S}$ and for each \ac{UL} and \ac{DL} user there can be at most one pair
$(i,j)\in \mathcal{S}$,
respectively.
The assignment $\mathcal{S}$ is said to be \textit{feasible} if it contains $I$ pairs;
otherwise the assignment is called \textit{partial}~\cite{Bertsekas1998}.
Lastly, in terms of the
assignment matrix $\mathbf{X}$, the assignment is feasible if
constraints~\eqref{eq:dl_to_ul}-\eqref{eq:ul_to_dl} are fulfilled for all $i\!\in\mathcal{I}$ and
$j\!\in\mathcal{J}$.

An important notion for the correct operation of the auction algorithm is the
$\epsilon$-\textit{complementary slackness} ($\epsilon$-CS), which relates a partial assignment
$\mathcal{S}$ and a price vector $\mathbf{\hat{p}}\!=[\hat{p}_1 \ldots \hat{p}_J]$. In practice,
the
\ac{DL} user $j$ that supports more interference from a \ac{UL} user $i$ will get a higher price,
i.e., the prices reflect how much a \ac{UL} user $i$ is willing pay to connect to \ac{DL} user $j$.
The couple $\mathcal{S}$ and $\mathbf{\hat{p}}$ satisfy
$\epsilon$-CS if for
every pair $(i,j)\in \mathcal{S}$, \ac{DL} user $j$ is within $\epsilon$ of being the best
candidate pair
for \ac{UL} user $i$~\cite{Bertsekas1998}, i.e.,
\begin{equation}
 c_{ij} - \hat{p}_j \geq \max_{k\in \mathcal{A}(i)} \{ c_{ik} - \hat{p}_k \} - \epsilon, \forall
 (i,j) \in  \mathcal{S}.
\end{equation}
For the sake of clarity we define $c_{ij} - \hat{p}_j$ as the utility 
that \ac{UL} user $i$ can obtain from \ac{DL} user $j$.
The auction algorithm is iterative, where each iteration starts with a partial assignment and the
algorithm terminates when a feasible assignment is obtained.

The iteration process consists of two phases: the \textit{bidding} and the \textit{assignment}.
In the bidding phase, each \ac{UL} user bids for a \ac{DL} user that maximizes the associated
utility ($c_{ij} - \hat{p}_j$), and the
\ac{BS} evaluates the bid received from the \ac{UL} users.
In the assignment phase, the \ac{BS}
(responsible for the transmission to \ac{DL} users), selects the \ac{UL} user with the highest bid
and updates the prices.
In fact, this bidding and assignment process implies that
the \ac{UL} users select the \ac{DL} users which they will be paired with. However,
the information exchange occurs between \ac{UL} users and the \ac{BS},
rather than between the \ac{UL}-\ac{DL} users.
Therefore, we propose a forward auction in which the \ac{UL} users and
the \ac{BS} determines the pairing of \ac{UL}-\ac{DL} users in a distributed manner, where the
\ac{UL} is responsible for bidding and the \ac{BS} for the assignment phase.

In the following, we present some necessary definitions for the iteration process.
First, we define $v_i$
as the maximum utility achieved by \ac{UL} user $i$ on the set of possible \ac{DL} users
$\mathcal{A}(i)$,
which is given by
\begin{equation}\label{eq:max_util_ul_ue}
 v_i = \underset{j\in \mathcal{A}(i)}{\max} \{c_{ij} -\hat{p}_j \}.
\end{equation}
The selected \ac{DL} user $j_i$ is the one that maximizes $v_i$, which is given by
\begin{equation}\label{eq:sel_dl_ue}
 j_i = \underset{j\in \mathcal{A}(i)}{\argmax}\; v_i.
\end{equation}
The best utility offered by other \ac{DL} users than the selected $j_i$ is denoted by
$w_i$ and is given by
\begin{equation}\label{eq:sel_2nd_dl_ue}
 w_i = \underset{j\in \mathcal{A}(i),j\neq j_i}{\max} \{c_{ij} -\hat{p}_j \}.
\end{equation}
The bid of \ac{UL} user $i$ on resource $j_i$ is given by
\begin{equation}\label{eq:bid_dl_ue}
 b_{ij_i} = c_{ij} - w_i + \epsilon.
\end{equation}
Let $\mathcal{P}(j)$ denote the set of \ac{UL} users from which \ac{DL} user $j$
received a bid.
In the assignment phase the prices are updated based on the highest bid received, which is
given by
\begin{equation}\label{eq:upd_prices}
 \hat{p}_{j} = \underset{i\in \mathcal{P}(j)}{\max} \{b_{ij}\}.
\end{equation}
Subsequently, the \ac{BS} adds the pair $(i_j,i)$ to the assignment $\mathcal{S}$, where $i_j$ 
refers to the \ac{UL} user $i\in\mathcal{P}(j)$ that maximizes $b_{ij}$ in 
eq.~\eqref{eq:upd_prices} for \ac{DL} user $j$.
At \ac{UL} user $i$, $\mathbf{\hat{P}}_i = [\hat{P}_{i1}\ldots \hat{P}_{iJ}]$ denotes the price
vector of associating with \ac{DL} user $j$ and it is informed by the \ac{BS}.
Differently from $\hat{P}_{ij}$, $\hat{p}_j$ is the up-to-date maximum price of \ac{DL} user $j$.

Summarizing, in the bidding phase the \ac{UL} users need to evaluate $v_i$, $j_i$, $w_i$ and
$b_{ij_i}$, whereas in the assignment phase the \ac{BS} receives the bids and decides to update the
prices $\hat{p}_j$ or not. In the next section, we propose the distributed auction that is
executed in an asynchronous manner, where the \ac{UL} users perform the bidding, and the
\ac{BS} performs the assignment.
\subsection{The Distributed Auction Algorithm}\label{sub:auct_alg}
\algref{alg:dist_auct_ul} and \algref{alg:dist_auct_bs} show the steps of the iterative process of the
bidding and assignment phases, where the bidding is performed at each \ac{UL} user and the
assignment at the \ac{BS}.
We define messages M1, M2, M3 and M4 that enable the exchange of
information between the \ac{UL} users and the \ac{BS}.
Message M1 informs \ac{UL} user $i$
that the bid was accepted.
Message M2 informs that the bid is not high enough and it also
contains the most updated price $\hat{p}_j$ of the demanded \ac{DL} user $j_i$.
Message M3 informs
that a feasible assignment was found, which allows the auction algorithm to terminate. Message M4
informs the \ac{UL} and \ac{DL} users their respective pairs and transmitting powers.
Notice that
all these messages can be exchanged between the \ac{BS} and the \ac{UL}/\ac{DL} users using
control channels, such as \ac{PUCCH} and \ac{PDCCH}~\cite{3gpp.36.300}. 

\begin{algorithm}
 \footnotesize
  \caption{Distributed Auction Bidding at \ac{UL} user $i$}\label{alg:dist_auct_ul}
 \begin{algorithmic}[1]
   \STATE \textbf{Input:} $\mathbf{c}_{i},\epsilon$\label{alg_line:input_ul}
   \STATE Define $\mathbf{P}_i=0$, $i_j = \varnothing$ and
   $\mathcal{A}(i)=\mathcal{J}$\label{alg_line:eval_ul}
   \WHILE{ Message M3 is not received }\label{alg_line:cont_m3}
      \IF{Message M2 is received}
        \STATE Disconnect from previous \ac{DL} user $j_i$ and set
        $j_i=\emptyset$\label{alg_line:disc_prev_dl}
        \STATE Update prices $\hat{P}_{ij} = \hat{p}_j$\label{alg_line:upd_recv_price}
      \ENDIF
      \item[] \textbf{Bidding Phase at the \ac{UL} users}
	  \IF{$j_i=\emptyset$}
	     \STATE Evaluate $v_i$ according to
	     Equation~\eqref{eq:max_util_ul_ue}\label{alg_line:eval_vi}
	     \STATE Select the \ac{DL} user $j_i$ according to Equation~\eqref{eq:sel_dl_ue}
	     \STATE Evaluate $w_i$ according to
	     Equation~\eqref{eq:sel_2nd_dl_ue}\label{alg_line:sec_best_dl}
	     \STATE Evaluate the bid $b_{ij_i}$ according to
	     Equation~\eqref{eq:bid_dl_ue}\label{alg_line:calc_bid}
	     \STATE \textbf{Report} the selected \ac{DL} user $j_i$ and the bid
	     $b_{ij_i}$ and \textbf{wait} response\label{alg_line:report_bid}
	     \IF{Message M1 is received}
	       \STATE Store the assigned \ac{DL} user $j_i$
	     \ENDIF
	   \ENDIF
   \ENDWHILE
   \STATE Message M4 is received with the assigned \ac{DL} user $j_i$ and the power $p_i$
 \end{algorithmic}
\end{algorithm}

\ac{UL} user $i$ requires as inputs the benefits $\mathbf{c}_{i}$ and the $\epsilon$ for
the bidding phase (see line~\ref{alg_line:input_ul} on~\algref{alg:dist_auct_ul}).
Then, the price vector is initialized with zero, as well as the associated \ac{DL} user
is initially empty and the set of \ac{DL} users it can associate with is the set $\mathcal{J}$ (see line~\ref{alg_line:eval_ul}
in ~\algref{alg:dist_auct_ul}).
The auction algorithm at the \ac{UL} users will continue until message
M3 is received (see line~\ref{alg_line:cont_m3} on~\algref{alg:dist_auct_ul}).
If message M2 is
received, \ac{UL} user $i$ disconnects from the previously associated \ac{DL} user $j_i$ and set
it to $\emptyset$.
Next, the price received from the \ac{BS} is updated (see
lines~\ref{alg_line:disc_prev_dl}-\ref{alg_line:upd_recv_price} on~\algref{alg:dist_auct_ul}).
Notice that message M2 implies
that either the previously associated \ac{DL} user has a new association or the bid was lower than
the current price (the bid was placed with an outdated price).

Subsequently, if
\ac{UL} user $i$ is not associated with a \ac{DL} user, the bidding phase
starts.
In this phase,
the necessary variables -- $v_i$, $j_i$, $w_i$ and $b_{ij_i}$ -- are evaluated (see
lines~\ref{alg_line:eval_vi}-\ref{alg_line:calc_bid} on~\algref{alg:dist_auct_ul}).
\ac{UL} user $i$ reports the
selected \ac{DL} user and bid to the \ac{BS} and wait for the response on
line~\ref{alg_line:report_bid} on~\algref{alg:dist_auct_ul}.
If the response is message M1, then
the association to the selected \ac{DL} user $j_i$ is stored.


\begin{algorithm}
 \footnotesize
  \caption{Distributed Auction Assignment at the \ac{BS}}\label{alg:dist_auct_bs}
 \begin{algorithmic}[1]
   \STATE Estimate the channel gains $G_{ib}$, $G_{bj}$ and $G_{ij}$ for all \ac{UL} and \ac{DL}
     users \label{alg_line:est_g_matrix} 
   \STATE Evaluate the optimal power allocation $\mathbf{p}^u, \mathbf{p}^d$ for every pair
     $(i,j)$ based on the solution of problem \eqref{eq:min_qi_qj}\label{alg_line:eval_power}
   \STATE Evaluate $c_{ij}$ and then send to all \ac{UL} users its respective row
   ($\mathbf{c}_i = [c_{i1}\ldots c_{iJ}]^T$) along with $\epsilon$ \label{alg_line:eval_benef}
   \STATE Initialize the selected \ac{UL} users $i_j=\varnothing , \;\forall j$ and
   $\mathcal{P}(j)=\mathcal{I},\; \forall j$\label{alg_line:init_ul_user}
   \STATE Initialize the prices $\hat{p}_j=0,\;\forall j$ and the assignment matrix
   $\mathbf{X}=\mathbf{0}$\label{alg_line:init_prices}
   \item[]\textbf{Assignment Phase at the \ac{BS} }
   \WHILE{ $\mathbf{X}$ is not feasible } \label{alg_line:cont_condition}
     \IF{receive request from \ac{UL} user $i$}\label{alg_line:bid_accept}
     	\IF[Bid accepted]{$b_{ij} - \hat{p}_j \geq \epsilon$}
     	  \STATE Update prices according to Equation~\eqref{eq:upd_prices}
     	  \STATE Report M2 to the previous assigned user $i_j$ and the updated
     	  prices\label{alg_line:rep_upd_prices}
     	  \STATE Update $i_j = i$ and report M1 to \ac{UL} user $i$\label{alg_line:upd_dl_user}
     	  \STATE Update assignment $\mathbf{X}$ and if feasible, report
     	  M3 to \ac{UL} user $i_j$\label{alg_line:upd_assign}
     	\ELSE
     	  \STATE \textbf{Report} M2 and the updated prices $\hat{p}_j$ to \ac{UL} user
     	  $i$\label{alg_line:report_m2}
     	\ENDIF
     \ELSE
        \STATE Remain assigned to the \ac{UL} user on iteration $t$\label{alg_line:remain_assign}
     \ENDIF
   \ENDWHILE
   \STATE \textbf{Output}: $\mathbf{X},\mathbf{p}^u,\mathbf{p}^d$
   \STATE Report M4 to the \ac{UL} and \ac{DL} user their respective pairs and powers
 \end{algorithmic}
\end{algorithm}

The \ac{BS} runs \algref{alg:dist_auct_bs} and initially needs to acquire or estimate
all channel gains from \ac{UL} and \ac{DL} users, which can be done using reference signals
similar to those standardized by \ac{3GPP}~\cite{3gpp.36.300}.
Next, the BS evaluates the optimal power allocation
$\mathbf{p}^u, \mathbf{p}^d$ for all possible pairs based on the solution of
problem~\eqref{eq:min_qi_qj} (see lines~\ref{alg_line:est_g_matrix}-\ref{alg_line:eval_power} on
\algref{alg:dist_auct_bs}).
Then, the assignment benefits $c_{ij}$ are evaluated and the
corresponding row of each \ac{UL} user $i$ is sent (see line~\ref{alg_line:eval_benef} on
\algref{alg:dist_auct_bs}).
The value of $\epsilon$ is fixed and sent on
line~\ref{alg_line:eval_benef} of \algref{alg:dist_auct_bs}.
The selected \ac{UL} users $i_j$ for all \ac{DL} users $j$ is initially empty, and the set of
possible \ac{UL} users that a
\ac{DL} user may associate is defined as the set of \ac{UL} users $\mathcal{I}$ (see
line~\ref{alg_line:init_ul_user} on \algref{alg:dist_auct_bs}).
The prices $\hat{p}_j$ and the
assignment matrix $\mathbf{X}$ are initialized with zero (see line~\ref{alg_line:init_prices}).

The assignment phase at the \ac{BS} continues until the assignment matrix $\mathbf{X}$ is not
feasible (see line~\ref{alg_line:cont_condition}).
If the \ac{BS} receives request from \ac{UL}
user $i$ and the bid is accepted, the prices are updated based on the new bid and the \ac{BS}
reports M2 to the previously assigned user $i_j$ with the updated prices (see
lines~\ref{alg_line:cont_condition}-\ref{alg_line:rep_upd_prices} on~\algref{alg:dist_auct_bs}).
Then, the \ac{BS} updates the assigned user, reports message M1 to
\ac{UL} user $i$ and update the assignment $\mathbf{X}$ (see
lines~\ref{alg_line:upd_dl_user}-\ref{alg_line:upd_assign} on~\algref{alg:dist_auct_bs}).
If the new assignment is feasible, the BS reports M3 to all \ac{UL} users.

However, if the bid proposed by \ac{UL} user $i$ is not accepted, then the BS reports M2 to
\ac{UE} $i$ with the
updated prices (see line~\ref{alg_line:report_m2} on~\algref{alg:dist_auct_bs}).
Notice that while
the \ac{BS} does not received requests, the assignment does not change (see
line~\ref{alg_line:remain_assign}).
Once a feasible assignment is found and message M3 is sent,
the algorithm has as outputs the matrix assignment $\mathbf{X}$ and the power vectors
$\mathbf{p}^u$ and $\mathbf{p}^d$. With the assignment and the power vectors, message M4 is sent
to \ac{UL} and \ac{DL} users with their respective pairs and powers.

Therefore, by using Algorithms~\ref{alg:dist_auct_ul} and \ref{alg:dist_auct_bs}, we solve in a
distributed manner problem~\eqref{eq:assym_assig_prob}, but it is important to know how many
iterations the algorithms execute until a partial assignment is found, and how far this assignment
is from the optimal solution.
In order to address all these questions, in
\secref{sub:math_prop} we show that the algorithms terminate within a bounded number of iterations
and that the assignment given at the end is within $I\epsilon$ of being optimal.

\vspace{-1mm}
\subsection{Complexity and Optimality}\label{sub:math_prop}
In this subsection we derive a bound on the number of iterations of our proposed distributed
auction algorithms in Theorem~\ref{thm:compl_auction}. Moreover, in
Theorem~\ref{thm:opt_auction} we show that the given assignment solution by
Algorithms~\ref{alg:dist_auct_ul} and \ref{alg:dist_auct_bs} is within $I\epsilon$ of being
optimal.

\vspace{-1mm}
\begin{thm}\label{thm:compl_auction}
 Consider $I$ \ac{UL} users and $J$ \ac{DL} users in a \ac{TNFD} network. The distributed auction
 algorithms~\ref{alg:dist_auct_ul} and \ref{alg:dist_auct_bs} terminate within a finite number of
 iterations bounded by $IJ^2\ceil{\Delta/\epsilon}$, where $\Delta = \underset{\forall i,j}{\max}
 c_{ij} - \underset{\forall i,j}{\min} c_{ij}$.
\end{thm}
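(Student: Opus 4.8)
The plan is to adapt the classical termination argument for the forward auction algorithm~\cite{Bertsekas1998} to the asynchronous bidding/assignment split realised by Algorithms~\ref{alg:dist_auct_ul}--\ref{alg:dist_auct_bs}. First I would record two monotonicity facts that follow directly from the bid~\eqref{eq:bid_dl_ue} and the price update~\eqref{eq:upd_prices}. \textbf{(i)} Prices never decrease and each accepted bid raises the winning price by at least $\epsilon$: the \ac{BS} accepts only when $b_{ij}-\hat p_j\ge\epsilon$ and then sets $\hat p_j\leftarrow b_{ij}$, so $\hat p_j$ jumps up by at least $\epsilon$. \textbf{(ii)} Once a \ac{DL} user $j$ has received a bid it is assigned, and it remains assigned for the rest of the run: a later accepted bid only changes \emph{which} \ac{UL} user owns $j$ (the displaced owner gets M2), never freeing $j$. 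These two facts are the engine of the proof, since they make the price of every \ac{DL} user a step function that climbs in quanta of at least $\epsilon$ each time it is bid upon.

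The core of the argument is to cap how high a price can climb, and hence how many bids a single \ac{DL} user can attract. Here I would exploit the standing assumptions $I=J$ and $\mathcal{A}(i)=\mathcal{J}$ (every \ac{DL} user is biddable by every \ac{UL} user, as initialised in Algorithm~\ref{alg:dist_auct_ul}). As long as the assignment is not yet feasible the matching is partial, so fewer than $I$ \ac{UL} users are assigned; because a matching keeps the number of assigned \ac{UL} users equal to the number of assigned \ac{DL} users, at least one \ac{DL} user $j_0$ is still free and, never having been bid upon, still carries price $\hat p_{j_0}=0$. Whenever a \ac{UL} user $i$ places its best-response bid on some $j$, the maximum-utility property~\eqref{eq:max_util_ul_ue} gives $c_{ij}-\hat p_j\ge c_{ij_0}-\hat p_{j_0}=c_{ij_0}$, so $\hat p_j\le c_{ij}-c_{ij_0}\le\Delta$ with $\Delta=\max_{i,j}c_{ij}-\min_{i,j}c_{ij}$. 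Thus a \ac{DL} user can receive a bid only while its price is at most $\Delta$, and combining this with fact (i) shows each \ac{DL} user can be bid upon at most $\ceil{\Delta/\epsilon}$ times before feasibility is reached.

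It remains to turn the per-object bid count into the stated iteration bound. Summed over the $J$ \ac{DL} users this already gives at most $J\ceil{\Delta/\epsilon}$ price-raising (accepted) bids, which proves finiteness. To recover the claimed $IJ^2\ceil{\Delta/\epsilon}$ I would charge the remaining iterations — those in which a \ac{UL} user bids with a stale price and is rejected through M2 (the \textbf{else} branch of Algorithm~\ref{alg:dist_auct_bs}) — against the accepted ones: between two successive price changes of the system each of the $I$ \ac{UL} users can be forced to re-bid at most once per \ac{DL} user, i.e. $O(IJ)$ attempts, so multiplying the $J\ceil{\Delta/\epsilon}$ price rises by this $O(IJ)$ accounting yields $IJ^2\ceil{\Delta/\epsilon}$. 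I expect the main obstacle to be exactly this asynchronous bookkeeping: unlike the textbook synchronous auction, where every iteration is a successful bid, one must here argue that stale-price rejections cannot cycle and that a displacement via M2 cannot trigger an unbounded cascade of re-bids. Both rest on the monotonicity of prices from fact (i) together with the invariant that a zero-price \ac{DL} user persists until feasibility, which is precisely what keeps every price bounded by $\Delta$ and therefore every object's price ladder finite.
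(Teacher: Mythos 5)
Your proof is correct and takes essentially the route the paper itself relies on: the paper provides no argument of its own, deferring to Xu et al.\ (and ultimately Bertsekas's classical auction analysis), which rests on exactly your three ingredients --- monotone prices rising by at least $\epsilon$ per accepted bid, the invariant that while the assignment is partial some \ac{DL} user remains free at price zero so every price at which a bid can be accepted stays below $\Delta$ (this uses $I=J$ and $\mathcal{A}(i)=\mathcal{J}$, as you note), hence at most $\ceil{\Delta/\epsilon}$ accepted bids per \ac{DL} user. Your additional bookkeeping for stale-price rejections (each M2 updates one local price to its true value, and a bid placed with an accurate local price of the target necessarily satisfies $b_{ij_i}-\hat{p}_{j_i}\geq\epsilon$ since $v_i\geq w_i$, so rejections between price changes cannot cycle) is precisely what accounts for the extra $IJ$ factor in the stated bound and is the part the paper omits for space.
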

\begin{proof}
The proof of this theorem is along the lines of Xu et al.~\cite[Chapter 5]{Xu2014}, where we do not
provide the complete proof herein due to the lack of space.
\end{proof}
\vspace{-2mm}

Theorem~\ref{thm:compl_auction} shows that our algorithms terminate in a finite number of
iterations bounded by $IJ^2\ceil{\Delta/\epsilon}$. However, we still need to know how far the
solution is from the optimal assignment. 
In Theorem~\ref{thm:opt_auction} we show that the
feasible assignment at the end of the distributed auction is within $I\epsilon$ of being
optimal, and if the benefits $c_{ij}$ are integer and $\epsilon<1/J$, the solution is optimal.

\vspace{-1mm}
\begin{thm}\label{thm:opt_auction}
 Consider problem~\eqref{eq:assym_assig_prob}. The distributed auction
 algorithm described by Algorithms~\ref{alg:dist_auct_ul} and \ref{alg:dist_auct_bs} terminate
 with a feasible assignment that is within $I\epsilon$ of being optimal. This feasible assignment
 is optimal if $c_{ij},\forall i,j$ is integer and $\epsilon<1/J$.
\end{thm}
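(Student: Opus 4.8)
The plan is to establish two facts about the terminal assignment $\mathcal{S}$ produced by Algorithms~\ref{alg:dist_auct_ul} and \ref{alg:dist_auct_bs}: first, that $\mathcal{S}$ together with the final price vector $\mathbf{\hat{p}}$ satisfies the $\epsilon$-CS condition; and second, that any feasible assignment satisfying $\epsilon$-CS is within $I\epsilon$ of optimal. The termination with a \emph{feasible} assignment (all $I$ UL users paired) follows from Theorem~\ref{thm:compl_auction}, so I may take feasibility for granted and focus on optimality.

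**First I would** verify that $\epsilon$-CS is preserved as an invariant throughout the auction. At initialization all prices are zero and $\mathcal{S}$ is empty, so $\epsilon$-CS holds vacuously. The key step is to show that each bidding/assignment round maintains it: when UL user $i$ bids on $j_i$ using the bid $b_{ij_i} = c_{ij_i} - w_i + \epsilon$ from Eq.~\eqref{eq:bid_dl_ue}, the new price satisfies $\hat{p}_{j_i} \geq b_{ij_i}$, and a short computation using the definitions of $v_i$ in Eq.~\eqref{eq:max_util_ul_ue} and $w_i$ in Eq.~\eqref{eq:sel_2nd_dl_ue} shows that after the assignment the pair $(i,j_i)$ satisfies $c_{ij_i} - \hat{p}_{j_i} \geq \max_{k\in\mathcal{A}(i)}\{c_{ik}-\hat{p}_k\} - \epsilon$. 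One must also check that pairs already in $\mathcal{S}$ keep satisfying the condition: prices only ever \emph{increase}, which can only tighten the right-hand side, but since those users were within $\epsilon$ when they were assigned and bidders only displace an incumbent (who is then freed and re-bids), the $\epsilon$-CS slack for retained pairs is not violated. This monotonicity-of-prices argument is the technical heart of the invariance.

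**Next I would** convert the $\epsilon$-CS property of the feasible terminal assignment into the optimality bound. Given a feasible $\mathcal{S}$ satisfying $\epsilon$-CS with prices $\mathbf{\hat{p}}$, summing the inequality $c_{ij} - \hat{p}_j \geq \max_{k}\{c_{ik}-\hat{p}_k\} - \epsilon$ over all $I$ assigned pairs gives a lower bound on the total benefit $\sum_{(i,j)\in\mathcal{S}} c_{ij}$ in terms of $\sum_i \max_k\{c_{ik}-\hat{p}_k\} + \sum_j \hat{p}_j$ minus $I\epsilon$. For any other feasible assignment $\mathcal{S}'$, the quantity $\sum_{(i,j)\in\mathcal{S}'} (c_{ij}-\hat{p}_j) \leq \sum_i \max_k\{c_{ik}-\hat{p}_k\}$, and because every feasible assignment covers each DL price exactly once the $\sum_j \hat{p}_j$ terms cancel. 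Combining these yields $\sum_{\mathcal{S}} c_{ij} \geq \sum_{\mathcal{S}'} c_{ij} - I\epsilon$; taking $\mathcal{S}'$ optimal gives the claimed $I\epsilon$ gap. This is the standard duality-gap estimate for auction algorithms and I would adapt it to the asymmetric setting of problem~\eqref{eq:assym_assig_prob}.

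**Finally**, for the integrality claim I would argue that when all $c_{ij}$ are integers the optimal value is an integer, so the total benefit of $\mathcal{S}$ lies strictly within $I\epsilon < I/J \leq 1$ of the integer optimum (using $\epsilon < 1/J$ and $I \leq J$); an assignment whose benefit is within less than $1$ of the integer optimum must in fact \emph{equal} it, giving exactness. \textbf{The main obstacle} I anticipate is the careful bookkeeping in the invariance step: because the algorithm is asynchronous and uses possibly outdated prices $\hat{P}_{ij}$ at the UL side (a bid may be placed against a stale price, triggering message M2), I must confirm that bidding with an outdated price never breaks $\epsilon$-CS at the BS, where the authoritative price $\hat{p}_j$ lives and where bids are actually accepted only if $b_{ij}-\hat{p}_j \geq \epsilon$. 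Reconciling the distributed, asynchronous price updates with the clean synchronous $\epsilon$-CS argument of Bertsekas is where the real work lies.
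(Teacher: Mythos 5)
Your proposal is correct and is essentially the proof the paper intends: the paper gives no explicit argument, deferring to \cite[Chapter 5]{Xu2014}, i.e., the standard Bertsekas-style auction analysis, and your three steps --- $\epsilon$-CS as an invariant preserved by monotonically increasing prices (including the handling of stale prices, which is harmless precisely because prices never decrease and the BS rejects bids failing $b_{ij}-\hat{p}_j\geq\epsilon$), the primal--dual sandwich turning $\epsilon$-CS of the feasible terminal assignment into the $I\epsilon$ gap, and the integrality argument under $\epsilon<1/J$ with $I=J$ as the paper assumes --- are exactly that argument. One implicit point worth making explicit: the final step requires the terminal assignment's benefit to be an integer as well (immediate, since it is a sum of integer $c_{ij}$), so that two integers within strictly less than $1$ of each other coincide.
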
\vspace{-4mm}
\begin{proof}
Once more, the proof of this theorem is along the lines of~\cite[Chapter 5]{Xu2014}.
\end{proof}
\vspace{-2mm}

Based on Theorems~\ref{thm:compl_auction} and \ref{thm:opt_auction}, the distributed auction
solution proposed by Algorithms~\ref{alg:dist_auct_ul} and \ref{alg:dist_auct_bs} terminate within
$IJ^2\ceil{\Delta/\epsilon}$ iterations, and in addition the feasible assignment at the end of the
algorithms is within $I\epsilon$ of being optimal. 
Notice that in practice the benefits
$c_{ij}$ are seldom integer, implying that our solution to problem~\eqref{eq:assym_assig_prob} is 
near optimal. 
Moreover, since the primal problem~\eqref{eq:weight_prob} is \ac{MINLP}, the duality gap
between the primal and dual solution is not zero, i.e., we should also take into account the
duality gap on top of the gap between the distributed auction and the optimal assignment for
problem~\eqref{eq:assym_assig_prob}. However, as we show in \secref{sec:num_res_disc}, the gap
between the exhaustive primal solution and the distributed auction is small.

\section{Numerical Results and Discussion}\label{sec:num_res_disc}
In this section we consider a single cell system operating in the urban micro
environment~\cite{3gpp.36.814}. 
The maximum number of frequency channels is $F=25$ that corresponds to the number of
available frequency channel blocks in the 
a 5 MHz \ac{LTE} system~\cite{3gpp.36.814}.
The total number of served \acs{UE} varies between $I+J=8...50$, where we assume that $I=J$.
We set the weights $\alpha_i^u$ and $\alpha_j^d$ based
on a \textit{path loss compensation rule}, where $\alpha_i^u = G_{ib}^{-1}$ and $\alpha_j^d=
G_{bj}^{-1}$. The parameters of this system are set according to \tabref{tab:sim_param}.

To evaluate the performance of the distributed auction in this environment, we use the RUdimentary
Network Emulator (RUNE) as a basic platform for system simulations 
and extended it to \ac{FD} cellular networks.
The RUNE \ac{FD} simulation tool allows to generate the environment of \tabref{tab:sim_param}
and perform Monte Carlo simulations using either an exhaustive search algorithm to solve
problem~\eqref{eq:weight_prob} or the distributed auction.

\begin{table}
 \caption{Simulation parameters}\label{tab:sim_param}
 \scriptsize
 \begin{tabularx}{\columnwidth}{l|l}
    \hline 
    \textbf{Parameter}                         & \textbf{Value} \\
    \hline 
    Cell radius                                & \unit{100}{m} \\
    Number of \ac{UL} \acp{UE} $[I=J]$         & $[4\;5\;6\; 25]$  \\
    Monte Carlo iterations                     & $400$ \\ \hline
    Carrier frequency                          & \unit{2.5}{GHz} \\
    System bandwidth                           & \unit{5}{MHz} \\
    Number of freq. channels $[F]$             & $[4\;5\;6\; 25]$  \\
    \ac{LOS} path-loss model                   & $34.96 + 22.7\log_{10}(d)$\\
    \ac{NLOS} path-loss model                  & $33.36 + 38.35\log_{10}(d)$\\
    Shadowing st. dev. \ac{LOS} and \ac{NLOS}  & \unit{3}{dB} and \unit{4}{dB}\\
    Thermal noise power $[\sigma^2]$           & \unit{-116.4}{dBm}/channel \\
    \ac{SI} cancelling level $[\beta]$         & $[-70\;-100 -110]$~\unit{}{dB}\\ \hline
    Max power $[P^u_{\text{max}}]=[P^d_{\text{max}}]$& \unit{24}{dBm}\\
    Minimum \ac{SINR} $[\gamma_{\text{th}}^u=\gamma_{\text{th}}^d]$  & [0]\unit{}{dB} \\
    Step size $[\epsilon]$                     & 0.1\\
    \hline
\end{tabularx}
\end{table}

Initially, we compare the optimality gap between the exhaustive search solution of the primal
problem~\eqref{eq:weight_prob}, named herein as E-OPT, the optimal solution of the dual
problem~\eqref{eq:dual_prob} using the power allocation based on the corner points and the
centralized Hungarian algorithm for the assignment, named herein as C-HUN, and finally the
solution of the dual problem~\eqref{eq:dual_prob} with the optimal power allocation but now with
the distributed auction solution for the assignment, named herein as D-AUC.
In the following, we compare how the distributed auction solution performs in comparison with a
\ac{HD} system, named herein as HD, and also a basic \ac{FD} solution with random assignment and
equal power allocation (EPA) for \ac{UL} and \ac{DL} users, named herein as R-EPA. Notice that
since in \ac{HD} systems two different time slots are required to serve all the \ac{UL} and
\ac{DL} users, which implies that the sum spectral efficiency is divided by two.


In \figref{fig:CDF_comp_Opt_SE_Total} we show the sum spectral efficiency between E-OPT, C-HUN and
the proposed D-AUC as a measure of the optimality gap. We assume a small system with reduced
number of users, 4 \ac{UL} and \ac{DL} users, and frequency channels, where we increase its number
from 4 to 8. 
Moreover, we consider a \ac{SI} cancelling level of $\beta=$\unit{-100}{dB}.
\begin{figure}
\centering
\includegraphics[width=0.7\linewidth,trim=0mm 0mm 0mm 2mm,,clip]{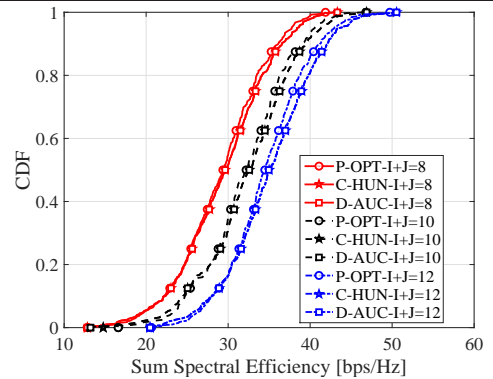}
\caption{\ac{CDF} of the optimality gap of the sum spectral efficiency for different users' load.
We notice that the optimality gap between the proposed D-AUC and the exhaustive search solutions
of the primal and dual, E-OPT and C-HUN, is low, which suggests that we can use a distributed solution
based on the dual and still be close to the centralized optimal solution of the primal.}
\label{fig:CDF_comp_Opt_SE_Total}
\end{figure}
We notice that the differences between the exhaustive search solutions, either E-OPT OR C-HUN, to
the D-AUC is negligible, where in some cases the D-AUC achieves a higher performance than E-OPT
due to lack of computational power to find the best powers. \figref{fig:CDF_comp_Opt_SE_Total} clearly
shows that the optimality gap is low for the distributed auction when compared to the centralized dual
solution (C-HUN) and also to the primal solution (E-OPT). Therefore, we can use a distributed solution
to solve the primal problem~\eqref{eq:weight_prob} and still achieve a solution close to the centralized
optimal solution.


\figref{fig:CDF_comp_sumSE} shows the sum spectral efficiency between the current HD system, a 
naive
\ac{FD} implementation named R-EPA, and the proposed distributed solution D-AUC. We assume a small system
fully loaded with 25 \ac{UL}, \ac{DL} users, and frequency channels, where we analyse the impact of the solutions
for different \ac{SI} cancelling levels of \unit{-110}{dB} and \unit{-70}{dB}, i.e., $\beta
=$\unit{-110}{dB} and \unit{-70}{dB}.
\begin{figure}
\centering
\includegraphics[width=0.7\linewidth,trim=0mm 0mm 0mm 2mm,,clip]{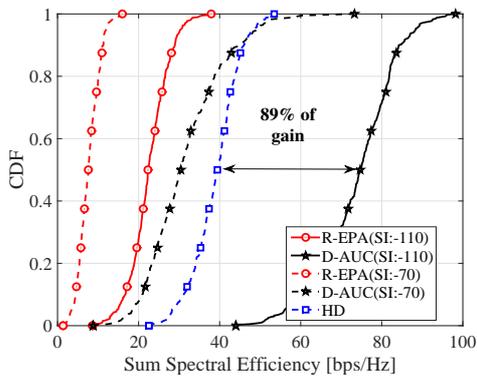}
\caption{\ac{CDF} of the sum spectral efficiency among all users for different \ac{SI} cancelling
levels. We notice that with $\beta =$\unit{-110}{dB} the \ac{UE}-to-\ac{UE} interference is the
limiting factor, where D-AUC mitigates this new interference and outperforms the \ac{HD} mode and
the R-EPA. When $\beta =$\unit{-70}{dB}, the \ac{SI} is the limiting factor, where the mitigation
of the \ac{UE}-to-\ac{UE} interference is not enough to bring gains to \ac{FD} cellular networks.}
\label{fig:CDF_comp_sumSE}
\end{figure}
Notice that with a \ac{SI} cancelling level of \unit{-110}{dB} we achieve \unit{89}{\%}
relative gain in the spectral efficiency
at the 50th percentile, which is close to the expected doubling of \ac{FD} networks. 
Moreover, the naive R-EPA performs
approximately \unit{43}{\%} worse than the HD mode, which shows that despite the high \ac{SI} cancelling level, we do not
have any gain of using \ac{FD} networks. 
This behaviour shows that we should also optimize the \ac{UL}-\ac{DL} pairing and
the power allocation of the \ac{UL} user and of the \ac{BS}. 
When the \ac{SI} level is \unit{-70}{dB}, HD outperforms the
D-AUC and R-EPA with a relative gain of approximately \unit{23}{\%} and \unit{81}{\%} at the 50th percentile, respectively.
This means that with
low \ac{SI} cancelling levels the D-AUC algorithm is not able to overcome the high self-interference,
although the difference to HD is not high. 
As for the R-EPA, notice that its performance is even worse than before, which once more
indicates that we should not use naive implementation of user pairing and power allocation on
\ac{FD} cellular networks. 
Overall, we notice that when the \ac{SI} cancelling level is high, the
\ac{UE}-to-\ac{UE} interference is the limiting factor, where our proposed D-AUC outperforms a
naive \ac{FD} implementation that disregards this interference. 
When the \ac{SI} cancelling level
is low, then the \ac{SI} is the limiting factor, where optimizing the \ac{UE}-to-\ac{UE}
interference is not enough to bring gains to \ac{FD} cellular networks.

\vspace{-1mm}
\section{Conclusion}\label{sec:concl}
In this paper we considered the joint problem of user pairing and
power allocation in \ac{FD} cellular networks.
Specifically, our objective was to maximize the weighted sum spectral efficiency of the users, where
we can tune the weights to sum maximization or path loss compensation.
This problem was posed as a mixed integer nonlinear optimization, which is hard to solve directly,
thus
we resorted to Lagrangian duality and developed a closed-form solution for the assignment and
optimal power allocation. Since we were interested in a distributed solution between the \ac{BS}
and the users, we proposed a novel distributed auction solution to solve the assignment problem,
whereas the power allocation was solved in a centralized manner. We showed that the distributed auction
converges and that it has a guaranteed performance compared to the dual.
The numerical results showed that our distributed solution drastically improved the sum spectral efficiency
in a path loss compensation modelling, i.e. of the users with low spectral efficiency, when compared to current
HD modes when the \ac{SI} cancelling level is high. Furthermore, we noticed that with a high
\ac{SI} cancelling level, the impact of the \ac{UE}-to-\ac{UE} interference is severe and needs to
be properly managed. Conversely, when the \ac{SI} cancelling level is low, a proper management of
the \ac{UE}-to-\ac{UE} interference is not enough to bring gains to \ac{FD} cellular networks.
Studying the case of asymmetric assignment, that is the case of unequal number of UL and DL users 
and the impact of the number of iterations and processing delays in the auction algorithm are left 
for future works.
%
\vspace{-1mm}
\bibliographystyle{IEEEtran}
\bibliography{FDref}

\end{document}